\documentclass[oneside,american]{amsart}
\usepackage[T1]{fontenc}
\usepackage[utf8]{inputenc}
\usepackage{amsthm}
\usepackage{amssymb}

\makeatletter
\numberwithin{equation}{section}
\numberwithin{figure}{section}

\makeatother

\theoremstyle{plain}
\newtheorem{thm}{\protect\theoremname}
\newtheorem{prop}[thm]{\protect\propositionname}
\newtheorem{lem}[thm]{\protect\lemmaname}
\usepackage{babel}
\providecommand{\lemmaname}{Lemma}
\providecommand{\propositionname}{Proposition}
\providecommand{\theoremname}{Theorem}

\begin{document}
\title[A uniformly efficient rejection sampler for MED]{A Uniformly Efficient Rejection Sampler for the Multivariate Expectile-Based
Distribution}
\author{Sam Power}
\begin{abstract}
The multivariate expectile-based distribution is a `spiked' perturbation
of a multivariate Gaussian distribution, proposed in \cite{arbel2023multivariate}.
The rejection sampler proposed for this distribution in the initial
work is valid, but its acceptance probability can degenerate badly
both in high dimension and in the strongly asymmetric limit. We give
a simple alternative. After affine whitening and a polar decomposition,
the sampling problem reduces exactly to a univariate distribution,
and an additional hyperbolic change of variables exhibits this distribution
as log-concave, so that a universal and uniformly efficient construction
of Devroye \cite{devroye2014random} applies. The resulting exact
sampler therefore enjoys an acceptance probability at least $1-\exp\left(-1\right)=0.632120\ldots$,
uniformly over the dimension and all admissible asymmetry parameters. 
\end{abstract}

\maketitle

\section{The MED sampling problem}

Let $d\geq2$, let $\mu\in\mathbb{R}^{d}$, and let $\Sigma\in\mathbb{R}^{d\times d}$
be symmetric and positive definite. For $\left\Vert x\right\Vert _{\Sigma^{-1}}:=\left(x^{\top}\Sigma^{-1}x\right)^{1/2}$,
$\left\langle x,y\right\rangle _{\Sigma^{-1}}:=x^{\top}\Sigma^{-1}y$,
fix $\nu\in\mathbb{R}^{d}$ satisfying $\left\Vert \nu\right\Vert _{\Sigma^{-1}}=1$,
and let $0\leq\rho<1$.

Arbel et al.\ \cite{arbel2023multivariate} introduce the multivariate
expectile-based distribution. We write $X\sim\mathsf{MED}\left(\mu,\Sigma,\nu,\rho\right)$
when $X$ has this probability law on $\mathbb{R}^{d}$ with density
\[
\begin{split}f_{d}\left(x;\mu,\Sigma,\nu,\rho\right) & =\frac{C_{d}\left(\rho\right)}{\left(2\pi\right)^{d/2}\left|\Sigma\right|^{1/2}}\\
 & \qquad\exp\left(-\frac{1}{2}\left\Vert x-\mu\right\Vert _{\Sigma^{-1}}\left(\left\Vert x-\mu\right\Vert _{\Sigma^{-1}}+\rho\left\langle x-\mu,\nu\right\rangle _{\Sigma^{-1}}\right)\right),
\end{split}
\]
where 
\[
C_{d}\left(\rho\right)=\sqrt{1-\rho^{2}}\left(\frac{1+\sqrt{1-\rho^{2}}}{2}\right)^{(d-2)/2}.
\]
When $\rho=0$, this reduces to $\mathcal{N}\left(\mu,\Sigma\right)$.

Choose $A$ such that $\Sigma=AA^{\top}$ and set $v=A^{-1}\nu$.
Then $\left\Vert v\right\Vert =1$, and the affine representation
$X=\mu+AY$ reduces the sampling problem to drawing a standardised
variable $Y$ from a density of the form
\[
f\left(y\right)\propto\exp\left(-\frac{\left\Vert y\right\Vert ^{2}}{2}-\frac{\rho}{2}\left\Vert y\right\Vert \left\langle v,y\right\rangle \right).
\]

\section{The Gaussian rejection sampler}

Arbel et al.\ \cite{arbel2023multivariate} obtain a rejection sampler
from the elementary bound $\left|\left\langle v,y\right\rangle \right|\leq\left\Vert y\right\Vert $,
suggesting the Gaussian upper envelope
\[
\exp\left(-\frac{\left\Vert y\right\Vert ^{2}}{2}-\frac{\rho}{2}\left\Vert y\right\Vert \left\langle v,y\right\rangle \right)\leq\exp\left(-\frac{1-\rho}{2}\left\Vert y\right\Vert ^{2}\right),
\]
and a rejection sampler based on the proposal $\mathcal{N}\left(0,\frac{1}{1-\rho}I_{d}\right)$.
They characterise the acceptance probability of this proposal as
\[
\alpha^{\mathrm{Gauss}}_{d}\left(\rho\right)=\frac{\left(1-\rho\right)^{d/2}}{C_{d}\left(\rho\right)}=\sqrt{\frac{1-\rho}{1+\rho}}\left(\frac{2\left(1-\rho\right)}{1+\sqrt{1-\rho^{2}}}\right)^{(d-2)/2},
\]
which is readily seen to degenerate as $\rho\uparrow1$, particularly
for large $d$. This motivates the search for an improved proposal.

\section{A symmetry-aware representation}

The inefficiency of the Gaussian envelope comes from treating the
MED tilt as a genuinely $d$-dimensional perturbation. In fact, the
perturbation depends only on the radius and one angular coordinate.

For the standardised variable $Y$, define $Q=\frac{\left\Vert Y\right\Vert ^{2}}{2}$,
$\Theta=\frac{Y}{\left\Vert Y\right\Vert }$, $T=\left\langle v,\Theta\right\rangle $.
Under the standard Gaussian reference distribution, $Q$ is distributed
as $\mathsf{Gamma}\left(\frac{d}{2},1\right)$, $\Theta$ is uniform
on $\mathbb{S}^{d-1}$, and $T$ is drawn from the law on $\left[-1,1\right]$
with density given by
\[
h_{d}\left(t\right)=c_{d}\left(1-t^{2}\right)^{(d-3)/2},\qquad c_{d}=\frac{\Gamma\left(\frac{d}{2}\right)}{\sqrt{\pi}\Gamma\left(\frac{d-1}{2}\right)}.
\]
Here and below, the second argument of $\mathsf{Gamma}$ denotes its
rate. Moreover, $Q$ is independent of $\Theta$ and hence of $T$;
of course, $T$ is itself a function of $\Theta$. This coordinate
system allows for a simplified perspective on the MED family.
\begin{prop}
Under the standardised MED law, the marginal density of $T$ is proportional
to
\[
\frac{\left(1-t^{2}\right)^{(d-3)/2}}{\left(1+\rho t\right)^{d/2}}.
\]
Conditional on $T=t$, $Q$ is distributed as $\mathsf{Gamma}\left(\frac{d}{2},1+\rho t\right)$;
and conditional on $\left(Q,T\right)=\left(q,t\right)$, $Y$ is equal
in law to $\sqrt{2q}\left\{ tv+\sqrt{1-t^{2}}W\right\} $, where $W$
is uniform on the unit sphere in $v^{\perp}$.
\end{prop}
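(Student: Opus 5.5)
The plan is to write the standardised MED density as a reweighting of the standard Gaussian. Then I will read off the joint law of $(Q,\Theta)$ using the polar-coordinate facts recalled just before the statement. In the variables $Q=\|Y\|^{2}/2$ and $T=\langle v,\Theta\rangle$, the exponent $-\frac{\|y\|^{2}}{2}-\frac{\rho}{2}\|y\|\langle v,y\rangle$ becomes $-q-\frac{\rho}{2}\|y\|^{2}t=-q(1+\rho t)$. Hence $f(y)=\phi_{d}(y)\cdot w(y)/Z$, where $\phi_{d}$ is the standard Gaussian density, $w(y)=\exp(-\rho qt)$ depends on $y$ only through $(Q,T)$, and $Z$ is a normalising constant. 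It follows that the law of $(Q,\Theta)$ under MED is absolutely continuous with respect to its Gaussian reference law, with density proportional to $e^{-\rho qt}$.

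Under the reference law, $Q\sim\mathsf{Gamma}(d/2,1)$ is independent of $\Theta\sim\mathrm{Unif}(\mathbb{S}^{d-1})$. The MED joint law of $(Q,\Theta)$ is therefore proportional to
\[
q^{d/2-1}e^{-q}e^{-\rho q\langle v,\theta\rangle}\,dq\,\sigma(d\theta)=q^{d/2-1}e^{-q(1+\rho\langle v,\theta\rangle)}\,dq\,\sigma(d\theta).
\]
I then carry out three steps.

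First, for fixed $\theta$, the $q$-dependence is a $\mathsf{Gamma}(d/2,1+\rho t)$ kernel. Since $|t|\le1$ and $\rho<1$, the rate $1+\rho t$ is strictly positive, so the kernel is integrable. This gives the conditional law of $Q$, and it depends on $\theta$ only through $t$.

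Second, integrating out $q$ produces the factor $\Gamma(d/2)(1+\rho t)^{-d/2}$. The marginal law of $\Theta$ is then proportional to $(1+\rho\langle v,\theta\rangle)^{-d/2}\sigma(d\theta)$. Pushing this forward under $\theta\mapsto t$, and using that the uniform law pushes forward to $h_{d}$, gives the marginal density of $T$ proportional to $(1-t^{2})^{(d-3)/2}(1+\rho t)^{-d/2}$.

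Third, I need the conditional law of $\Theta$ given $T=t$. Decompose $\theta=tv+\sqrt{1-t^{2}}\,\omega$ with $\omega\in\mathbb{S}^{d-1}\cap v^{\perp}$. Under the uniform law, $\omega$ is uniform on that sphere and independent of $t$; this follows from rotation invariance about the $v$ axis. The density $(1+\rho t)^{-d/2}$ is a function of $t$ only, so reweighting by it leaves this conditional structure unchanged: $\omega$ remains uniform and independent of $T$. Moreover, given $\Theta$, the conditional law of $Q$ depends only on $T$. Hence, given $(Q,T)=(q,t)$, $\Theta=tv+\sqrt{1-t^{2}}W$ with $W$ uniform on the unit sphere of $v^{\perp}$, and $Y=\sqrt{2q}\,\Theta$ gives the stated representation.

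The main obstacle is a rigorous justification that $W$ is independent of $(Q,T)$ and uniform. I would handle it by factorising the uniform measure on $\mathbb{S}^{d-1}$ as $h_{d}(t)\,dt\otimes\sigma_{v^{\perp}}(d\omega)$; this is the standard disintegration, valid for $d\ge2$, and for $d=2$ the set $v^{\perp}$ sphere is two points. I would then observe that the full joint density of $(Q,T,W)$ factorises as a function of $(q,t)$ times the constant in $\omega$. Everything else is routine Gamma integration.
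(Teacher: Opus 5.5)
Your proposal is correct and follows essentially the paper's route. You write the MED law as the Gaussian reweighted by $\exp(-\rho QT)$, obtain the joint density $q^{d/2-1}e^{-q(1+\rho t)}(1-t^{2})^{(d-3)/2}$, read off the conditional $\mathsf{Gamma}\left(\frac{d}{2},1+\rho t\right)$ law, and integrate out $q$ to get the marginal of $T$. The only difference is that you pass through $(Q,\Theta)$ and spell out the factorisation of the uniform measure on $\mathbb{S}^{d-1}$ as $h_{d}(t)\,dt\otimes\sigma_{v^{\perp}}(d\omega)$, including the two-point case $d=2$, to justify the reconstruction of $Y$; the paper simply calls this the usual polar decomposition.
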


\begin{proof}
The density of the MED with respect to the standard Gaussian may be
written as 
\[
\exp\left(-\frac{\rho}{2}\left\Vert Y\right\Vert \left\langle v,Y\right\rangle \right)=\exp\left(-\frac{\rho}{2}\left\Vert Y\right\Vert ^{2}T\right)=\exp\left(-\rho QT\right).
\]
Consequently, the joint density of $\left(Q,T\right)$ is proportional
to 
\[
q^{d/2-1}\exp\left(-q\left(1+\rho t\right)\right)\left(1-t^{2}\right)^{(d-3)/2}.
\]
Since $1+\rho t>0$ for every $t\in[-1,1]$, this is a Gamma density
in $q$ conditional on $t$. Integrating out $q$ gives 
\[
\int^{\infty}_{0}q^{d/2-1}\exp\left(-q\left(1+\rho t\right)\right)\,\mathrm{d}q=\frac{\Gamma\left(\frac{d}{2}\right)}{\left(1+\rho t\right)^{d/2}},
\]
which proves the marginal formula. The reconstruction of $Y$ is the
usual polar decomposition conditional on its projection onto $v$. 
\end{proof}

\section{A hyperbolic change of coordinates}

Introduce the hyperbolic coordinate $\Psi=\tanh^{-1}T\in\mathbb{R}$,
and the parameter $\psi_{0}=\tanh^{-1}\rho\in\mathbb{R}$. Application
of the change-of-variables formula shows that the density for $\Psi$
on $\mathbb{R}$ is proportional to
\begin{align*}
g_{d,\rho}\left(\psi\right) & =\left(\cosh\psi\right)^{-\left(d-1\right)}\left(1+\rho\tanh\psi\right)^{-d/2}\\
 & \propto\left(\cosh\psi\right)^{-(d/2-1)}\left(\cosh\left(\psi+\psi_{0}\right)\right)^{-d/2}.
\end{align*}
Write its negative log-density (up to an unimportant additive constant)
as 
\[
\begin{split}U_{d,\psi_{0}}\left(\psi\right) & :=\left(\frac{d}{2}-1\right)\log\cosh\psi+\frac{d}{2}\log\cosh\left(\psi+\psi_{0}\right).\end{split}
\]

\begin{prop}
For every $d\geq2$ and $\psi_{0}\in\mathbb{R}$, the function $U_{d,\psi_{0}}$
is strictly convex and coercive on $\mathbb{R}$, with derivatives
\begin{align*}
U^{\prime}_{d,\psi_{0}}\left(\psi\right) & =\left(\frac{d}{2}-1\right)\tanh\psi+\frac{d}{2}\tanh\left(\psi+\psi_{0}\right)\\
U^{\prime\prime}_{d,\psi_{0}}\left(\psi\right) & =\left(\frac{d}{2}-1\right)\mathrm{sech}^{2}\psi+\frac{d}{2}\mathrm{sech}^{2}\left(\psi+\psi_{0}\right).
\end{align*}
It has a unique minimiser at $\psi_{\star}=\tanh^{-1}t_{\star}$,
where, writing $\rho=\tanh\psi_{0}$,
\[
t_{\star}=-\frac{d\rho}{d-1+\sqrt{\left(d-1\right)^{2}-d\left(d-2\right)\rho^{2}}}.
\]
\end{prop}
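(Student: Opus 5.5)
Differentiating $\log\cosh$ twice gives the two displayed formulas for $U'_{d,\psi_0}$ and $U''_{d,\psi_0}$ at once, since $(\log\cosh)'=\tanh$ and $\tanh'=\mathrm{sech}^2$. For strict convexity, note that $d/2-1\geq0$ and $d/2\geq1>0$. The second summand $\frac d2\mathrm{sech}^2(\psi+\psi_0)$ is strictly positive everywhere, so $U''>0$ on $\mathbb{R}$. For coercivity, use $\log\cosh x\geq |x|-\log 2$. Then $U(\psi)\geq \frac d2|\psi+\psi_0|-C$, which tends to $+\infty$ as $|\psi|\to\infty$. Alternatively, $U'(\psi)\to\pm(d-1)$ as $\psi\to\pm\infty$, and $d-1\geq1$.

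For existence and uniqueness of the minimiser, $U'$ is continuous and strictly increasing with limits $\mp(d-1)$ at $\mp\infty$. Hence it has exactly one zero $\psi_\star$, and this zero is the unique minimiser. The remaining work is to identify $t_\star=\tanh\psi_\star$ explicitly. Write $t=\tanh\psi$ and use the addition formula $\tanh(\psi+\psi_0)=\frac{t+\rho}{1+\rho t}$. The equation $U'=0$ becomes
\[
\left(\tfrac d2-1\right)t\left(1+\rho t\right)+\tfrac d2\left(t+\rho\right)=0 .
\]
Multiplying by $2$ gives the quadratic
\[
(d-2)\rho\, t^2+2(d-1)t+d\rho=0 .
\]
Its discriminant is $4\left[(d-1)^2-d(d-2)\rho^2\right]$. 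This is at least $4\left[(d-1)^2-d(d-2)\right]=4$, so it is positive.

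The main obstacle is choosing the correct root and writing it in the stated form. When $\rho(d-2)\neq0$, the root is
\[
t=\frac{-(d-1)\pm\sqrt{D}}{(d-2)\rho},\qquad D:=(d-1)^2-d(d-2)\rho^2 .
\]
Rationalise using the identity $\left(-(d-1)+\sqrt D\right)\left(-(d-1)-\sqrt D\right)=(d-1)^2-D=d(d-2)\rho^2$. This shows that the "$+$" root equals
\[
-\frac{d\rho}{d-1+\sqrt D},
\]
which is exactly the stated $t_\star$. This expression also covers the degenerate cases $\rho=0$ and $d=2$. In those cases the equation is linear, with solution $t=-d\rho/(2(d-1))$, and the formula reduces to this because $\sqrt D=d-1$ there.

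It then remains to check that this root lies in $(-1,1)$, so that $\psi_\star=\tanh^{-1}t_\star$ is defined. Since $\sqrt D\geq1$ and $|\rho|<1$, we get $|t_\star|\leq \frac{d|\rho|}{d}<1$. By uniqueness of the zero of $U'$ on $\mathbb{R}$, equivalently of the root in $(-1,1)$, this identifies $\psi_\star$. The other root has modulus greater than $1$, since the product of the two roots is $d/(d-2)>1$, so it is spurious. Finally, $\rho\in(-1,1)$ for any real $\psi_0$, so the argument covers all $\psi_0\in\mathbb{R}$ as claimed.
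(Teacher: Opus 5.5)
Your proof is correct and follows the same route as the paper: read strict convexity off the positive second derivative, substitute $t=\tanh\psi$ via the addition formula for $\tanh(\psi+\psi_{0})$, and solve the resulting quadratic. You also supply details the paper leaves implicit, all of which check out: coercivity, the rationalisation identifying the root in $(-1,1)$, and the degenerate cases $\rho=0$ or $d=2$.
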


\begin{proof}
The derivative calculations are immediate, and the displayed second
derivative proves strict convexity. To determine the mode, write $t=\tanh\psi$
and use the hyperbolic identity $\tanh\left(\psi+\psi_{0}\right)=\frac{t+\rho}{1+\rho t}$;
solving $U^{\prime}_{d,\psi_{0}}\left(\psi\right)=0$ then reduces
to a quadratic equation in $t$ which can be solved explicitly.
\end{proof}

\section{The Devroye reduction}

We record here a construction of Devroye \cite{devroye2014random}
which gives a universal solution to the task of univariate rejection
sampling in the unnormalised, log-concave setting. The construction
involves forming a rather explicit `three-piece lower envelope' to
the negative log-density. We give the pertinent statement for our
result, leaving the proof and further discussion to the original paper.
\begin{lem}
Let $U:\mathbb{R}\to\mathbb{R}$ be differentiable, strictly convex,
and coercive. Let $\psi_{\star}$ be its global minimiser, and let
$\psi_{-}<\psi_{\star}<\psi_{+}$ solve $U\left(\psi_{-}\right)=U\left(\psi_{+}\right)=U\left(\psi_{\star}\right)+1$.
Let $\underline{U}:\mathbb{R}\to\mathbb{R}$ be the piecewise-affine
minorant to $U$ obtained by drawing tangents at the three points
$\psi_{-},\psi_{\star},\psi_{+}$ and then taking their pointwise
maximum. Then $U\geq\underline{U}$ globally, and it holds that
\[
\frac{\int_{\mathbb{R}}\exp\left(-U\left(\psi\right)\right)\,\mathrm{d}\psi}{\int_{\mathbb{R}}\exp\left(-\underline{U}\left(\psi\right)\right)\,\mathrm{d}\psi}\geq1-\exp\left(-1\right).
\]
Consequently, rejection sampling with envelope $\exp\left(-\underline{U}\right)$
and target $\exp\left(-U\right)$ has acceptance probability at least
$1-\exp\left(-1\right)$.
\end{lem}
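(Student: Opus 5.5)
The plan is a direct computation: normalise the minimum value, show that each half of the envelope integral equals a simple length, and bound the target integral from below on each half by a convexity chord bound.

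\emph{Setup and minorant property.} Since adding a constant to $U$ rescales both integrals by the same factor, I will assume $U(\psi_{\star})=0$. Existence of $\psi_{\pm}$ follows from continuity, coercivity and strict monotonicity of $U$ on either side of $\psi_{\star}$. Write $s_{\pm}=U'(\psi_{\pm})$. Strict convexity together with $U(\psi_{+})=1>0=U(\psi_{\star})$ gives $s_{+}>0$, and symmetrically $s_{-}<0$. Since $U'(\psi_{\star})=0$, the three tangents are
\[
\ell_{\star}\equiv 0,\qquad \ell_{\pm}(\psi)=1+s_{\pm}\left(\psi-\psi_{\pm}\right),
\]
and $\underline{U}=\max(\ell_{-},\ell_{\star},\ell_{+})$. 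For a differentiable convex function every tangent line lies below the graph, so each $\ell\leq U$ and hence $\underline{U}\leq U$.

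\emph{Computing the envelope integral.} I split at $\psi_{\star}$ and treat the right half; the left half is symmetric.
\begin{itemize}
\item For $\psi\geq\psi_{\star}$, $\ell_{-}$ is decreasing with $\ell_{-}(\psi_{\star})\leq U(\psi_{\star})=0$, so it never exceeds $\ell_{\star}$ there. Hence $\underline{U}=\max(0,\ell_{+})$ on $[\psi_{\star},\infty)$.
\item The zero of $\ell_{+}$ is $a_{+}=\psi_{+}-1/s_{+}$. By convexity, $\ell_{+}(\psi_{\star})\leq U(\psi_{\star})=0$, which gives $a_{+}\geq\psi_{\star}$.
\item Therefore
\[
\int_{\psi_{\star}}^{\infty}e^{-\underline{U}}=\left(a_{+}-\psi_{\star}\right)+\int_{a_{+}}^{\infty}e^{-s_{+}(\psi-a_{+})}\,\mathrm{d}\psi=\left(a_{+}-\psi_{\star}\right)+\frac{1}{s_{+}}=\psi_{+}-\psi_{\star}.
\]
\end{itemize}
Symmetrically, the left half equals $\psi_{\star}-\psi_{-}$, so the total envelope mass is $\psi_{+}-\psi_{-}$.

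\emph{Lower bound on the target.} On $[\psi_{\star},\psi_{+}]$, convexity places $U$ below its chord, so $U(\psi)\leq(\psi-\psi_{\star})/(\psi_{+}-\psi_{\star})$. This gives
\[
\int_{\psi_{\star}}^{\psi_{+}}e^{-U}\geq\int_{\psi_{\star}}^{\psi_{+}}\exp\left(-\frac{\psi-\psi_{\star}}{\psi_{+}-\psi_{\star}}\right)\mathrm{d}\psi=\left(\psi_{+}-\psi_{\star}\right)\left(1-e^{-1}\right).
\]
The same bound holds on $[\psi_{-},\psi_{\star}]$, and discarding the tails only decreases the left-hand side. Summing the two halves gives
\[
\int_{\mathbb{R}}e^{-U}\geq\left(1-e^{-1}\right)\left(\psi_{+}-\psi_{-}\right)=\left(1-e^{-1}\right)\int_{\mathbb{R}}e^{-\underline{U}}.
\]
The acceptance-probability statement then follows from the standard rejection-sampling identity: the acceptance probability equals the ratio of target mass to envelope mass.

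The only delicate point is the check $a_{\pm}$ lies on the correct side of $\psi_{\star}$, together with the check that the opposite tangent is inactive on each half. Both follow from the tangent-below-graph inequality evaluated at $\psi_{\star}$, and they are exactly what produces the clean value $\psi_{+}-\psi_{\star}$ for each half of the envelope integral.
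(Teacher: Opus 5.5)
Your proof is correct and complete: the two tangent-below-graph checks at $\psi_{\star}$ give the exact envelope mass $e^{-U(\psi_{\star})}(\psi_{+}-\psi_{-})$, and the chord bound on $[\psi_{-},\psi_{+}]$ gives the target lower bound $(1-e^{-1})e^{-U(\psi_{\star})}(\psi_{+}-\psi_{-})$. The paper gives no proof of this lemma and defers to Devroye; your argument appears to be the standard one behind that construction, so it can stand in for the citation as a self-contained proof (and it shows the inequality is in fact strict, since the tails are discarded).
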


\section{Uniform efficiency}

Fix $d\geq2$ and  $0\leq\rho<1$. Given  $\mu\in\mathbb{R}^{d}$,
symmetric positive-definite $\Sigma$, and $\nu\in\mathbb{R}^{d}$
with $\left\Vert \nu\right\Vert _{\Sigma^{-1}}=1$, consider the following
procedure.
\begin{enumerate}
\item Choose $A$ such that $\Sigma=AA^{\top}$, and set $v=A^{-1}\nu$. 
\item Set $\psi_{0}=\tanh^{-1}\rho$, construct the three-piece envelope
for $U_{d,\psi_{0}}$, and use it to draw $\Psi$; set $T=\tanh\Psi$.
\item Draw $Q\sim\mathsf{Gamma}\left(\frac{d}{2},1+\rho T\right)$.
\item Draw $G\sim\mathcal{N}\left(0,I_{d}\right)$, and set $W=\frac{G-\left\langle v,G\right\rangle v}{\left\Vert G-\left\langle v,G\right\rangle v\right\Vert }$. 
\item Form $Y=\sqrt{2Q}\left\{ Tv+\sqrt{1-T^{2}}W\right\} $, and return
$X=\mu+AY$.
\end{enumerate}
\begin{thm}
The preceding procedure gives an exact draw from $\mathsf{MED}\left(\mu,\Sigma,\nu,\rho\right)$.
Its only rejection step has acceptance probability satisfying 
\[
\alpha^{\mathrm{MED}}_{d}\left(\rho\right)\geq1-\exp\left(-1\right)=0.632\ldots
\]
uniformly over $d\geq2$ and $0\leq\rho<1$. Consequently, 
\[
\sup_{d\geq2,\;0\leq\rho<1}\mathbf{E}\left[N\right]\leq\frac{e}{e-1}=1.58\ldots.
\]
\end{thm}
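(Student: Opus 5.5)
The proof splits into two independent parts: exactness of the output law, and the uniform bound on the acceptance probability. From the bound, the statement about $\mathbf{E}[N]$ follows immediately.

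For exactness, we work backwards through the representation of the first Proposition. First, the affine map $X=\mu+AY$ turns the standardised density $f(y)$ into $f_d(\cdot;\mu,\Sigma,\nu,\rho)$. Indeed, $\|x-\mu\|_{\Sigma^{-1}}=\|y\|$ and $\langle x-\mu,\nu\rangle_{\Sigma^{-1}}=y^\top A^\top\Sigma^{-1}\nu=\langle y,v\rangle$, and the Jacobian is a constant. It therefore suffices to show that $Y$ has density proportional to $f$.

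By the first Proposition, this holds if $(T,Q,W)$ have the stated structure. That is:
\begin{itemize}
\item $T$ has marginal density $\propto (1-t^2)^{(d-3)/2}(1+\rho t)^{-d/2}$;
\item $Q\mid T\sim\mathsf{Gamma}(d/2,1+\rho T)$;
\item $W$ is uniform on the unit sphere of $v^\perp$, independent of $(Q,T)$.
\end{itemize}
Step 3 gives the conditional law of $Q$ directly. Step 4 gives the law of $W$ by rotational invariance of $\mathcal N(0,I_d)$ restricted to $v^\perp$, since the projection of $G$ onto $v^\perp$ is a standard Gaussian on that $(d-1)$-dimensional space. Note that $d\ge2$, and the projection is nonzero almost surely.

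For $T$, rejection sampling with envelope $\exp(-\underline U)$ and target $\exp(-U_{d,\psi_0})$ returns an exact draw from the density $\propto g_{d,\rho}$; the envelope is valid by Devroye's lemma. It remains to check that $T=\tanh\Psi$ has the correct law. This is the change of variables $dt=\mathrm{sech}^2\psi\,d\psi$ with $1-t^2=\mathrm{sech}^2\psi$, read in reverse from the displayed formula for $g_{d,\rho}$. I would verify this explicitly once: $(1-t^2)^{(d-3)/2}\,dt=(\cosh\psi)^{-(d-3)}(\cosh\psi)^{-2}\,d\psi=(\cosh\psi)^{-(d-1)}\,d\psi$. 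The second form of $g_{d,\rho}$ follows from $1+\rho\tanh\psi=\cosh(\psi+\psi_0)/(\cosh\psi\cosh\psi_0)$.

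For the efficiency bound, the second Proposition shows that $U_{d,\psi_0}$ is differentiable, strictly convex and coercive for every $d\ge2$ and $\psi_0\in\mathbb R$. Devroye's lemma therefore applies verbatim. Coercivity guarantees that $\psi_\pm$ exist and are unique, and the lemma yields acceptance probability at least $1-e^{-1}$. The bound does not depend on $(d,\rho)$ because the lemma's constant is universal. Steps 3 to 5 involve no rejection.

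Finally, the number of trials $N$ is geometric with success probability $\alpha\ge1-e^{-1}$. Hence $\mathbf E[N]=1/\alpha\le e/(e-1)$, and taking the supremum over $d\ge2$ and $0\le\rho<1$ gives the stated bound.

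The main obstacle is modest. It is the careful bookkeeping in the change of variables, including the edge case $d=2$. There the exponent $(d-3)/2=-1/2$ makes $h_d$ unbounded at $t=\pm1$. However, $U_{2,\psi_0}=\log\cosh(\psi+\psi_0)$ is still strictly convex and coercive, so the argument goes through unchanged.
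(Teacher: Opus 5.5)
Your proposal is correct and follows the same route as the paper. You reduce to the standardised law via the affine map, and use the radial--angular Proposition for the conditional Gamma and spherical draws. You then check the hypotheses of Devroye's three-piece envelope lemma through the hyperbolic Proposition, and take the reciprocal of the acceptance bound for $\mathbf{E}[N]$. Your explicit checks of the affine identities, the $\tanh$ change of variables, the construction of $W$, and the $d=2$ case fill in bookkeeping that the paper's brief proof leaves implicit.
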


\begin{proof}
The radial--angular proposition shows that an exact MED draw is obtained
once $T$ is drawn from its stated marginal distribution, followed
by the conditional Gamma and uniform spherical draws. The hyperbolic
proposition shows that the law of $\Psi=\tanh^{-1}T$ is log-concave,
whereby the three-piece envelope lemma applies and facilitates a uniformly
efficient rejection sampler. All subsequent steps are direct conditional
draws, and the expected proposal bound is the reciprocal of the acceptance
lower bound. 
\end{proof}

\section{Comments and outlook}
\begin{enumerate}
\item The original Gaussian rejection sampler is attractive because it requires
no preprocessing beyond a Gaussian draw. However, treating the MED
asymmetry as a fully $d$-dimensional perturbation leads to a sampler
which is not uniformly efficient. The sampler proposed herein isolates
the actual non-Gaussian component of the problem, and exposes the
essentially univariate complexity of the problem.
\item The construction requires neither the MED normalising constant nor
any evaluation of special functions. The mode is explicit, while the
abscissae $\psi_{\pm}$ are available as the solutions of two monotone
scalar equations, reliably accessible by e.g., bisection search. Alternatively,
a second-order Taylor approximation might suggest using the approximate
abscissae $\widetilde{\psi}_{\pm}=\psi_{\star}\pm\left(2U^{\prime\prime}\left(\psi_{\star}\right)^{-1}\right)^{1/2}$
to initialise an adaptive rejection sampling routine, which one expects
to be rather efficient in practice. Some omitted calculations show
that even this cruder approach has acceptance probability at least
$0.375\ldots$ uniformly over the dimension and asymmetry parameter.
Numerical minimisation suggests that its actual worst-case acceptance
probability is approximately $0.828\ldots$, indicating very reliable
performance in practice.
\end{enumerate}

\section{Acknowledgements}

The substantive use of AI tools in this work is localised to the final
comments about the adaptive rejection sampling routine and its initialisation,
where conversations with OpenAI ChatGPT convinced the author that
the Taylor-based initialisation was viable and could enjoy an analogous
uniform efficiency. General encouragement from Julyan Arbel in the
early stages of this project is also warmly appreciated.

\bibliographystyle{plain}
\bibliography{med}

\end{document}